\newcounter{cases}
\newcounter{subcases}[cases]
\newenvironment{mycase}
{
	\setcounter{cases}{0}
	\setcounter{subcases}{0}
	\newcommand{\case}
	{
		\par\indent\stepcounter{cases}\textit{Case \thecases.}
	}
	
}
{
	\par
}
\renewcommand*\thecases{\arabic{cases}}
\theoremstyle{thmstyleone}%
\newtheorem{theorem}{Theorem}
\theoremstyle{thmstyletwo}%
\newtheorem{example}{Example}%
\newtheorem{remark}{Remark}%
\newtheorem{lemma}{Lemma}
\theoremstyle{thmstylethree}%
\newtheorem{definition}{Definition}%
\newcommand{\Mod}[1]{\ (\mathrm{mod}\ #1)}
\begin{document}
\title[Article Title]{A Construction of Arbitrarily Large Type-II $Z$ Complementary Code Set}


\author[1]{\fnm{Rajen} \sur{Kumar} \href{https://orcid.org/0000-0003-2100-6447}{\includegraphics[scale=0.01]{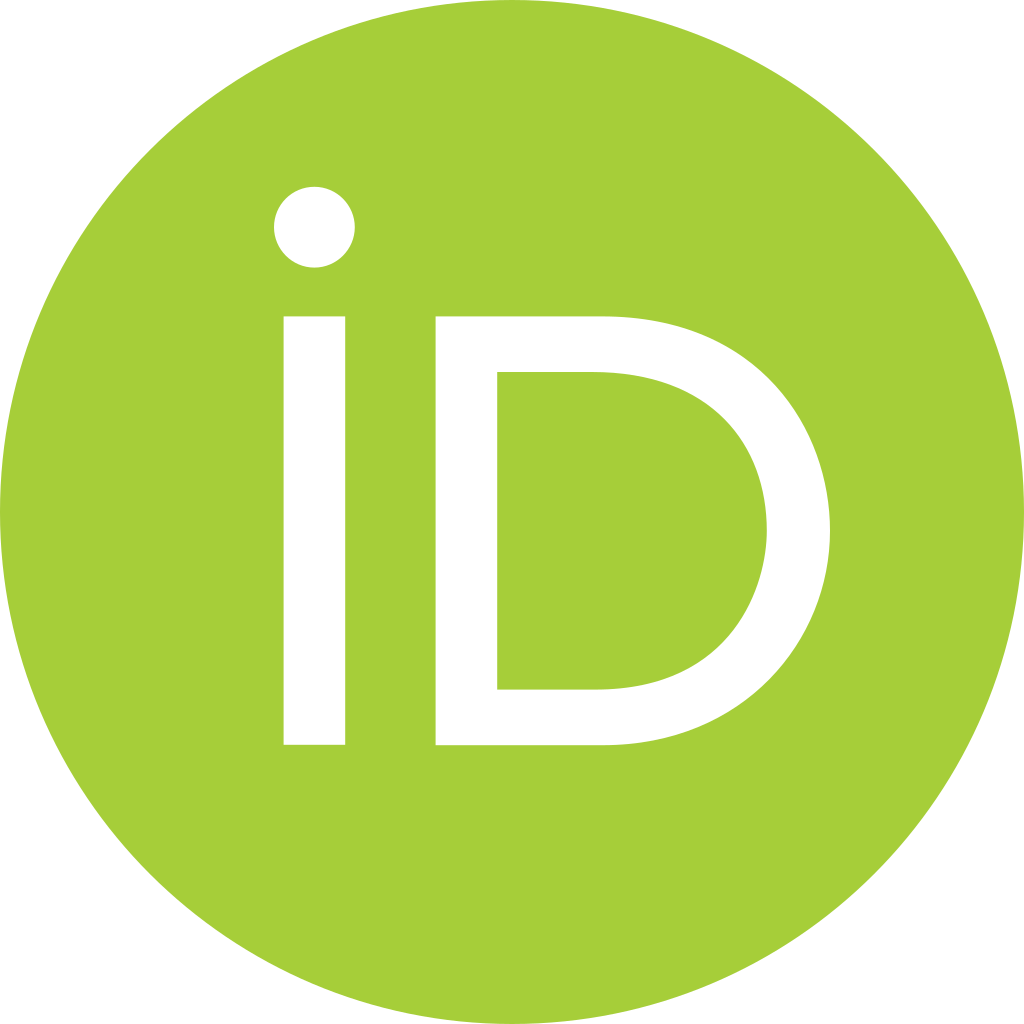}}}\email{rajen\_2021ma4@iitp.ac.in}

\author[1]{\fnm{Prashant Kumar} \sur{Srivastava} \href{https://orcid.org/0000-0002-7651-5639}{\includegraphics[scale=0.01]{orcid_logo.png}}}\email{pksri@iitp.ac.in}

\author*[2]{\fnm{Sudhan} \sur{Majhi} \href{https://orcid.org/0000-0002-2142-1862}{\includegraphics[scale=0.01]{orcid_logo.png}}}\email{smajhi@iisc.ac.in}

\affil[1]{\orgdiv{Department of Mathematics}, \orgname{Indian Institute of Technology}, \orgaddress{\city{Patna}, \postcode{801106}, \state{Bihar}, \country{India}}}

\affil[2]{\orgdiv{Department of Electrical Communication Engineering}, \orgname{Indian Institute of Sciene}, \orgaddress{ \city{Bangalore}, \postcode{560012}, \state{Karnataka}, \country{India}}}



\abstract{For a type-I $(K,M,Z,N)$-ZCCS, it follows $K \leq M \left\lfloor \frac{N}{Z}\right\rfloor$. In this paper, we propose a construction of type-II  $(p^{k+n},p^k,p^{n+r}-p^r+1,p^{n+r})$-$Z$ complementary code set (ZCCS) using an extended Boolean function, its properties of Hamiltonian paths and the concept of isolated vertices, where $p\ge 2$.  However, the proposed type-II ZCCS provides $K = M(N-Z+1)$ codes, where as for type-I $(K,M,N,Z)$-ZCCS, it is $K \leq M \left\lfloor \frac{N}{Z}\right\rfloor$. Therefore, the proposed type-II ZCCS provides a larger number of codes compared to type-I ZCCS. Further, as a special case of the proposed construction, $(p^k,p^k,p^n)$-CCC can be generated, for any integral value of $p\ge2$ and $k\le n$.}

\keywords{$Z$-complementary code set (ZCCS), complete complementary code (CCC),  generalized Boolean function (GBF), extended Boolean function (EBF), aperiodic correlation, zero correlation zone}



\maketitle
\section{Introduction}
A collection of codes is described as a complete complementary code (CCC) if the aperiodic auto-correlation function (AACF) value of each code is zero for all non-zero time shifts and the aperiodic cross-correlation function (ACCF) value between any two codes is zero for all time shifts \cite{sue}. CCC has shown a significant role in asynchronous multi-carrier code division multiple access (MC-CDMA) for interference-free communication  \cite{nextGenCDMA}. Rathinakumar and Chaturvedi proposed $(2^{k+1},2^{k+1},2^{k+m})$-CCC using a generalized Boolean function (GBF) and by associating it with a graph, where a Hamiltonian path of $m$ vertices is obtained after deleting the $k$ vertices from the graph \cite{Rat}. Recently, construction of CCCs with new and flexible parameters has been proposed in \cite{Palash_ar} based on a multivariable function. To utilise more users in the MC-CDMA system, the zero correlation zone (ZCZ) has been introduced and formed $Z$-complementary code set (ZCCS) in \cite{Fan, Feng_ZCCS}. ZCCSs are capable of supporting interference-free MC-CDMA in a quasi-synchronous environment without the need for power control \cite{nextGenCDMA}. Depending on whether the ZCZ width exists close to zero time shift or end time shift, the code set is referred to as a type-I ZCCS or type-II ZCCS, respectively. A $(K,M,Z,N)$-ZCCS is a collection of $K$ matrices, where each matrix has $M$ constituent sequences of length $N$ with ZCZ width $Z$. The number of constituent sequences is referred to as flock size. For a type-I $(K,M,Z,N)$-ZCCS, $K \leq M \left\lfloor \frac{N}{Z}\right\rfloor$ \cite{Fan} and there are several such type-I ZCCS constructions provided in \cite{Li, Das, SarkarLCOM, Sarkarpseudo, Ghosh, Wu}. Recently, type-II ZCCS has been proposed to increase the number of codes and ZCZ width \cite{Rajen_ar}. As type-II ZCCS has a wider ZCZ width and more codes compared to type-I ZCCS, it is suitable for the Quasi-synchronous MC-CDMA system with delays \cite{Rajen_ar}.

In this paper, we propose a direct construction of type-II $(p^{r+k},p^k,p^{n+r}-p^r+1,p^{n+r})$-ZCCS, where $p \ge 2$, $0<k<n$ and $r\ge 0$, i.e., we have $p^r$ times more codes compared to the flock size by compromising the $p^r-1$ time shift of ZCZ width. Therefore, in the proposed construction, a large number of codes is generated compared to type-I ZCCS with the same flock size. An extended Boolean function (EBF) is used to construct a type-II ZCCS. The construction is associated with a graph of a collection of $k$ Hamiltonian paths and $r$ isolated vertices. The proposed method also generates $(p^k,p^k,p^n)$-CCC for $k\le n$, $p\ge 2$. The works present in \cite{Davis, Rat, Shen_SNC} are special \textcolor{black}{cases} of the proposed construction.

The remainder of this manuscript has been structured as follows. Section \ref{Sec:Pre} comprehensively describes the notations, important definitions, and relations between a graph and an EBF. In Section \ref{Sec:PC}, we propose a construction of type-II ZCCS and provide an example. In Section \ref{sec:compar}, a comparison is made between the proposed construction and existing work. Section \ref{Sec:Con} comprises the concluding remarks of the paper.
\section{Preliminaries}\label{Sec:Pre}
In this section, we mention a few important definitions and symbols that will be utilised throughout the paper.
\begin{definition}
	Let $\mathbf{a}=(a_0,a_1,\hdots,a_{N-1})$ and $\mathbf{b}=(b_{0},b_{1},\hdots,b_{N-1})$ be two complex-valued sequences of length $N$. The \textcolor{black}{aperiodic cross correlation function} (ACCF) between $\mathbf{a}$ and $\mathbf{b}$ for time shift $\tau$ is defined as
 \begin{equation}
     \mathcal{C}_{\mathbf{a}, \mathbf{b}}(\tau)=\left\{\begin{array}{ll}
		\sum_{i=0}^{N-1-\tau} a_{i+\tau}^1 b_{i}^{*}, & 0 \leq \tau<N, \\
		\sum_{i=0}^{N+\tau-1} a_{i}^1 b_{i-\tau}^{*}, & -N<\tau<0, \\
		0, & \text { otherwise, }
	\end{array}\right.
 \end{equation}
	where $b_{i}^{*}$ is complex conjugate of $b_{i}$. When $\mathbf{a}=\mathbf{b}$, $\mathcal{C}_{\mathbf{a},\mathbf{b}}(\tau)$ is called AACF of $\mathbf{a}$ and is denoted as $\mathcal{C}_{\mathbf{a}}(\tau)$.
\end{definition}
Let $C_{i}$ be a code as
	\begin{equation}\label{Eq:code}
C_{i}=\left[\begin{array}{c} \mathbf{a}_{0}^{i} \\ \mathbf{a}_{1}^{i} \\ \vdots \\ \mathbf{a}_{M-1}^{i} \end{array}\right]_{M \times N},
	\end{equation}
where, $\mathbf{a}^i_\nu$ is $(\nu+1)$th row of the matrix $C_i$ with $N$ elements.
 \begin{definition}
     Let $C_{s}$ and $C_{t} $ be any two codes defined in \eqref{Eq:code}, then ACCF between $C_{s}$ and $C_{t}$ is defined by
	\begin{equation}
\mathcal{C}\left({C}_{s},{C}_{t}\right)(\tau)=\sum_{\nu=0}^{M-1} \mathcal{C}\left(\mathbf{a}_{\nu}^{s}, \mathbf{a}_{\nu}^{t}\right)(\tau).
	\end{equation}
When ${C}_{s}={C}_{t}$, $\mathcal{C}(C_{s},C_{t})(\tau)$ is called AACF of ${C}_{s}$ and is denoted as $\mathcal{C}(C_{s})(\tau)$.
 \end{definition}
  \begin{definition}
Let $\mathbf{C}=\left\{{C}_{0},{C}_{1}, \ldots, {C}_{K-1}\right\}$ be a set of $K$ codes defined in \eqref{Eq:code}, then $\mathbf{C}$ is called a type-I $(K,M,Z,N)$-ZCCS if it satisfies the following properties
	\begin{equation}
	\mathcal{C}\left(C_{k_{1}}, C_{k_{2}}\right)(\tau)=\left\{\begin{array}{ll}
			NM, & \tau=0, k_{1}=k_{2}, \\
			0, &  \tau=0, k_{1} \neq k_{2},\\
			0, & 1\le |\tau|<Z.
		\end{array}\right.
	\end{equation}
\end{definition}
 \begin{definition}
Let $\mathbf{C}=\left\{{C}_{0},{C}_{1}, \ldots, {C}_{K-1}\right\}$ be a set of $K$codes defined in \eqref{Eq:code}, then $\mathbf{C}$ is called a type-II $(K,M,Z,N)$-ZCCS if it satisfies the following properties
	\begin{equation}
	\mathcal{C}\left(C_{k_{1}}, C_{k_{2}}\right)(\tau)=\left\{\begin{array}{ll}
			NM, & \tau=0, k_{1}=k_{2}, \\
			0, &  \tau=0, k_{1} \neq k_{2},\\
			0, & N-Z<|\tau|<N.
		\end{array}\right.
	\end{equation}
\end{definition}
When $K=M$ and $Z=N-1$, type-II ZCCS is called a $(K,K,N)$-CCC.
\subsection{EBF and Corresponding Sequences}
Let $\mathbf{x}=(x_{0},x_{1},\ldots,x_{n-1})$ such that $\mathbf{x} \in \mathbb{Z}_{p}^{n}$, where $\mathbb{Z}_p=\{0,1,\ldots,p-1\}$, for $p\ge 2$. We say $f$ is an EBF of $n$ variables $x_{0},x_{1},\ldots,x_{n-1}$ over $\mathbb{Z}_p$ if $f$ uniquely expressed as a linear combination of monomials of the above variables, where the coefficients of each monomial are taken from $\mathbb{Z}_{q}$, where $\mathbb{Z}_{q}=\{0,1,\ldots,q-1\}$ and $p|q$ \textcolor{black}{($p$ divides $q$)}. We define a sequence over the EBF $f$ as $\chi(f)=\left(\zeta_{q}^{f(0)},\zeta_{q}^{f(1)},\ldots, \zeta_{q}^{f(p^n-1)} \right)$, where $f(I)=f(I_0,I_1,\ldots,I_{n-1})$ and $\zeta_q=\exp{(2\pi \sqrt{-1}/q)}$ \cite{Rajen}, such that 
\begin{equation}\label{Eq:I_vec}
    I=\sum_{i=0}^{n-1} p^{i}I_i.
\end{equation}
\subsection{Relation with Graph and EBF}
Let $Q:\mathbb{Z}_p^n\to \mathbb{Z}_{q}$ be a EBF defined by
\begin{equation}\label{Eq:Q}
	Q(x_0,x_1,\ldots,x_{n-1})=\sum_{0\le i<j<n}h_{ij}x_ix_j,
\end{equation}
where $h_{ij}\in\mathbb{Z}_{q}$, \textcolor{black}{$p\mid q$ ($p$ divides $q$),} so that $Q$ is a quadratic form in $n$ variables, $x_0,x_1,\ldots,x_{n-1}$  over $\mathbb{Z}_{p}$. Let $\mathcal{G}(Q)$ be a labelled graph of $n$ vertices with $Q$ as given in \eqref{Eq:Q}. Label the vertices by $x_0,x_1,\ldots,x_{n-1}$ and if $h_{ij} \ne 0$, join vertices $x_i$ and $x_j$ by an edge with the label of $h_{ij}$. A graph $\mathcal{G}(Q)$ of the type defined above becomes a Hamiltonian path if  $n > 1$ and $\mathcal{G}(Q)$ has exactly $n-1$ edges of weight $\mu$, where $\mu \in \{1,2,\ldots,q-1\}$, and each vertex is connected with almost two edges ($n=1$ indicates the presence of a single vertex). For $n > 1$, a Hamiltonian path on $n$ vertices corresponds to a quadratic form of the type
\begin{equation}
	Q(x_0,x_1,\ldots,x_{n-1})=\mu \sum_{i=1}^{n-1}x_{\pi (i-1)}x_{\pi (i)},
\end{equation}
where $\pi$ is a permutation of $\{0,1,\ldots,n-1\}$. This $\pi$ is dependent on the Hamiltonian path. Now, we define an EBF over Hamiltonian path $\mathcal{G}(Q)$ by $H:\mathbb{Z}_p^n\to \mathbb{Z}_q$ as
\begin{equation}\label{Eq:EBF_Qlinear}
    H(x_0,x_1,\ldots,x_{n-1})=Q+\sum_{l=1}^{q-1}\sum_{i=0}^{n+r-1}\gamma_{i,l}x_i^l+\theta,
\end{equation}
where $\gamma_{i,l},\theta \in \mathbb{Z}_p$. For $n=1$, $H$ is a linear polynomial.
\subsection{Peak to Mean Envelope Power Ratio (PMEPR)}
Let $\mathbf{a}=(a_0,a_1,\ldots,a_{N-1})$ be a complex valued sequence of length $N$. For a multi-carrier system with $N$ subcarriers, the time domain multi-carrier signal can be written as
\begin{equation}
    s_{\mathbf{a}}(t)=\sum_{j=0}^{N-1}a_je^{2 \pi \sqrt{-1}jt},
\end{equation}
where $0\le t<1$ and the carrier spacing has been normalized to $1$ and $\mathbf{a}$ is spreaded over $N$ subcarriers. The instantaneous envelope power
of the signal is the real-valued function $P_{\mathbf{a}}(t)=|s_{\mathbf{a}}(t)|^2$. A ZCCS based MC-CDMA is given in \cite{Yubo_Comm, Rajen_ar}. From \cite{Davis}, the PMEPR is bounded by the AACF value of $\mathbf{a}$ as follows,
\begin{equation}\label{Eq:PMEPR}
    PMEPR(\mathbf{a}) \le \sum_{\tau=-N+1}^{N+1}\left|\mathcal{C}(\mathbf{a})(\tau)\right|.
\end{equation}

\section{Proposed Construction}\label{Sec:PC}
\begin{lemma}\label{Lem:sum_zero}
\textcolor{black}{Let $p,k\in \mathbb{N}$, such that $p\ge 2$, $p \nmid k$  and $\zeta_p=\exp{\frac{2\pi \sqrt{-1}}{p}}$, then
\begin{equation}
    \sum_{i=0}^{p-1} \zeta_p^{k\cdot i}=0.
\end{equation}}
\end{lemma}
\begin{proof}
    Suppose $k\ne rp$, we have $k_1=k \Mod{p}$ such that $k_1 \ne 0$ and $k_1\in \mathbb{Z}_p$. To complete the proof, we consider two cases.
    \begin{mycase}
        \case $k_1$ is relatively prime with $p$. Then
        \begin{equation}
                 \sum_{i=0}^{p-1} \zeta_p^{k\cdot i}=\sum_{i=0}^{p-1} \zeta_p^{k_1\cdot i}  =\sum_{i=0}^{p-1} \zeta_p^{i} =0, ~~ \text{sum of $p$th root of unity.}
        \end{equation}
        \case $k_1$ is not relatively prime with $p$ then there exist a $d \in \{2,3,\ldots,p-1\}$ such that $p=p_1d$, $k_1=k_2d$ and $k_2$ is relatively prime with $p_1$. Now,
       \begin{equation}
   \sum_{i=0}^{p-1} \zeta_p^{k\cdot i}=\sum_{i=0}^{p-1} \zeta_p^{k_1\cdot i}=\sum_{i=0}^{p-1} \zeta_p^{k_2d\cdot i}=\sum_{i=0}^{p-1} \zeta_{p_1}^{k_2\cdot i}=d\sum_{i=0}^{p_1-1} \zeta_{p_1}^{k_2\cdot i} =0, ~~ \text{from \textit{Case $1$}.}
        \end{equation}
    \end{mycase}
By concluding the above two cases, the proof is complete.
\end{proof}
\begin{lemma}\label{Lem:Walash_extention}
Let $\mathcal{G}(\Omega)$ be graph of $r$ isolated vertices then set of EBF $\Omega^{\delta}:\mathbb{Z}_p^r\to\mathbb{Z}_{q}$ is defined by
\begin{equation}\label{Eq:Odx}
    \Omega^{\delta}(\mathbf{x})=\frac{q}{p}\sum_{i=0}^{r-1}\lambda_i\delta_ix_i,
\end{equation}
where $\gcd(\lambda_i,p)=1$ for $i\in \{0,1,r-1\}$, $\boldsymbol{\delta}=\{\delta_0,\delta_1,\ldots,\delta_{r-1}\}$ is vector representation of $0\le \delta < p^r$, with base $p$ and $p|q$. Then the dot product between sequences $\chi(\Omega^{\delta^{1}}(\mathbf{x}))$ and $\chi(\Omega^{\delta^2}(\mathbf{x}))$ is zero for $\delta^{1}\ne \delta^{2}$.
\end{lemma}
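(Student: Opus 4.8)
The plan is to reduce the claimed orthogonality to a product of one-dimensional character sums and then invoke Lemma \ref{Lem:sum_zero}. First I would write out the two sequences explicitly: by definition $\chi(\Omega^{\delta})=\left(\zeta_q^{\Omega^{\delta}(0)},\zeta_q^{\Omega^{\delta}(1)},\ldots,\zeta_q^{\Omega^{\delta}(p^r-1)}\right)$, where the argument $I\in\{0,1,\ldots,p^r-1\}$ is identified with its base-$p$ vector $\mathbf{x}=(x_0,\ldots,x_{r-1})$ via the convention in \eqref{Eq:I_vec}. Since $\Omega^{\delta}(\mathbf{x})=\frac{q}{p}\boldsymbol{\delta}\cdot\mathbf{x}=\frac{q}{p}\sum_{i=0}^{r-1}\delta_i x_i$, the inner product of $\chi(\Omega^{\delta^1})$ with the conjugate of $\chi(\Omega^{\delta^2})$ is
\begin{equation*}
\sum_{\mathbf{x}\in\mathbb{Z}_p^r}\zeta_q^{\frac{q}{p}(\boldsymbol{\delta}^1-\boldsymbol{\delta}^2)\cdot\mathbf{x}}
=\sum_{\mathbf{x}\in\mathbb{Z}_p^r}\prod_{i=0}^{r-1}\zeta_p^{(\delta_i^1-\delta_i^2)x_i},
\end{equation*}
using $\zeta_q^{q/p}=\zeta_p$. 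The sum over the product factorizes as $\prod_{i=0}^{r-1}\left(\sum_{x_i=0}^{p-1}\zeta_p^{(\delta_i^1-\delta_i^2)x_i}\right)$.

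Next I would use that $\delta^1\ne\delta^2$ forces at least one coordinate $i_0$ with $\delta_{i_0}^1\ne\delta_{i_0}^2$, and since each $\delta_i^j\in\mathbb{Z}_p=\{0,\ldots,p-1\}$ we have $0<|\delta_{i_0}^1-\delta_{i_0}^2|<p$, so $\delta_{i_0}^1-\delta_{i_0}^2$ is not a multiple of $p$. Applying Lemma \ref{Lem:sum_zero} with $q$ there replaced by $p$ and $k=\delta_{i_0}^1-\delta_{i_0}^2$, that single factor vanishes, hence the whole product is zero. This establishes $\chi(\Omega^{\delta^1})\cdot\overline{\chi(\Omega^{\delta^2})}=0$.

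I do not anticipate a serious obstacle here: the only subtlety is bookkeeping, namely confirming that the base-$p$ indexing of the sequence entries in \eqref{Eq:I_vec} matches the dot-product $\boldsymbol{\delta}\cdot\mathbf{x}$ so that summing over the $p^r$ sequence positions is literally summing over all $\mathbf{x}\in\mathbb{Z}_p^r$, and checking the edge case $r=1$ (where the ``graph of $r$ isolated vertices'' is a single vertex and the statement is just Lemma \ref{Lem:sum_zero}). One should also note the hypothesis that $q$ is an integral multiple of $p$ is exactly what makes $\frac{q}{p}$ an integer and $\zeta_q^{q/p}=\zeta_p$ legitimate. A brief remark that this recovers the classical Walsh-function orthogonality of \cite{Young} when $p$ is prime, now extended to all $p\ge 2$, would round out the proof.
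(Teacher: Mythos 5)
Your proposal is correct and follows essentially the same route as the paper: reduce the inner product to one-dimensional sums $\sum_{x_i=0}^{p-1}\zeta_p^{(\delta_i^1-\delta_i^2)x_i}$ and kill a factor with Lemma \ref{Lem:sum_zero}, noting that $\delta_{i_0}^1-\delta_{i_0}^2$ lies strictly between $-p$ and $p$ and is nonzero. The only difference is cosmetic: the paper splits into the cases of one versus several differing coordinates, whereas your full factorization of the sum into a product handles both at once.
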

\begin{proof}
    Let $\delta^1\ne \delta^2$, therefore, $\boldsymbol{\delta}^1$ is differ from $\boldsymbol{\delta}^2$ for at least one element. Without loss of generality, we assumed that $\boldsymbol{\delta}^1$ is differ from $\boldsymbol{\delta}^2$ for at $(u+1)$-th position of $p$-ary vector representations, for $u=0$ to $r-1$.
    \begin{mycase}
    \case Let  $\boldsymbol{\delta}^1=(\delta_0,\delta_1,\ldots,\delta^1_u,\ldots,\delta_{r-1})$ and $\boldsymbol{\delta}^2=(\delta_0,\delta_1,\ldots,\delta^2_u,\ldots,\delta_{r-1})$.
    \begin{equation}
        \Omega^{\delta^1}(\mathbf{x})-\Omega^{\delta^2}(\mathbf{x})=\frac{q\lambda_u}{p}(\delta^1_u-\delta^2_u)x_u.
    \end{equation}
    Let for an integer $I$, the $p$-ary vector representation of  $I$ be $\mathbf{I}=(I_0,I_1,\ldots,I_{r-1})$. Thus, the above can be expressed as
    \begin{equation}
        \Omega^{\delta^1}(\mathbf{I})-\Omega^{\delta^2}(\mathbf{I})=\frac{q\lambda_u}{p}(\delta^1_u-\delta^2_u)I_u.
    \end{equation}
      Now, we use this in dot product of  $\chi(\Omega^{\delta^{1}}(\mathbf{x}))$ and $\chi(\Omega^{\delta^2}(\mathbf{x}))$.
    \begin{equation}
        \begin{aligned}
            \chi(\Omega^{\delta^{1}}(\mathbf{x}))\cdot \chi(\Omega^{\delta^2}(\mathbf{x}))=&\sum_{I=0}^{p^r-1} \zeta_q^{\frac{q\lambda_u}{p}(\delta^1_u-\delta^2_u)I_u}\\
            =&p^{r-1}\sum_{I_u=0}^{p-1} \zeta_q^{\frac{q\lambda_u}{p}(\delta^1_u-\delta^2_u)I_u}\\
            =&p^{r-1}\sum_{I_u=0}^{p-1} \zeta_p^{\lambda_u(\delta^1_u-\delta^2_u)I_u}\\
            =&0,~~ \text{from Lemma \ref{Lem:sum_zero}.}
        \end{aligned}
    \end{equation}
    \case Let $\boldsymbol{\delta}^1$ be differ form  $\boldsymbol{\delta}^2$ at more than one places. Let $0\le j_0<j_1<\ldots<j_{k-1}\le r-1$, for $1< k \le r$ such that $\boldsymbol{\delta}^1$ is differ form  $\boldsymbol{\delta}^2$ at $j_0,j_1,\ldots,j_k$ places. Now
    \begin{equation}
        \Omega^{\delta^1}(\mathbf{x})-\Omega^{\delta^2}(\mathbf{x})=\frac{q\lambda_{j_u}}{p}\sum_{u=0}^{k-1}(\delta^1_{j_u}-\delta^2_{j_u})x_{j_u}.
    \end{equation}
Now, we use this in dot product of  $\chi(\Omega^{\delta^{1}}(\mathbf{x}))$ and $\chi(\Omega^{\delta^2}(\mathbf{x}))$.
    \begin{equation}
        \begin{aligned}
            \chi(\Omega^{\delta^{1}}(\mathbf{x}))\cdot \chi(\Omega^{\delta^2}(\mathbf{x}))=&\sum_{I=0}^{p^r-1} \zeta_q^{\sum_{u=0}^{k-1}\frac{q\lambda_{j_u}}{p}(\delta^1_{j_u}-\delta^2_{j_u})I_{j_u}}\\
            =& \sum_{I=0}^{p^r-1}\zeta_p^{\sum_{u=0}^{k-1}\lambda_{j_u}(\delta^1_{j_u}-\delta^2_{j_u})I_{j_u}}\\
            =& \sum_{I=0}^{p^r-1} \left( \prod_{u=0}^{k-1}\zeta_p^{\lambda_{j_u}(\delta^1_{j_u}-\delta^2_{j_u})I_{j_u}} \right)\\
            =&p^{r-k} \sum_{I_{j_u}=0}^{p-1} \zeta_p^{\lambda_{j_u}(\delta^1_{j_u}-\delta^2_{j_u})I_{j_u}}\\
            =&0,~~ \text{from Lemma \ref{Lem:sum_zero}.}
        \end{aligned}
    \end{equation}
    \end{mycase}
    The proof from the above two cases is complete.
\end{proof}
Now, we use two collections of graphs, one with Hamiltonian paths and the other with isolated vertices, to construct type-II ZCCS.
\begin{theorem}\label{Th:multi_path}
Let $\mathcal{G}(F)$ be a graph of $n+r$ vertices such that there are $r$ isolated vertices and $k$ many Hamiltonian paths from rest of $n$ vertices. For $(i+1)$-th path weight of each edges is $\frac{q\lambda_i}{p}$ and $\pi_i$ is permutation of $n_i$ vertices, where $n=n_0+n_1+\cdots+n_{k-1}$, $p|q$ and $\gcd(\lambda_i,p)=1$ for $i\in \{0,1,\ldots,k-1\}$. We label the isolated vertices as $x_0,x_1,\ldots,x_{r-1}$. Now, we define a set of EBF $F_{\beta}^{\alpha p^r+\delta}:\mathbb{Z}_p^{n+r}\to \mathbb{Z}_q$ as
\begin{equation}\label{Eq:F_ZCCS_max}
\begin{aligned}
     F_{\beta}^{\alpha p^r+\delta}(x_0,x_1,\ldots,x_{n+r-1})=&\sum_{i=0}^{k-1}\left(\frac{q\lambda_i}{p}\sum_{\nu=0}^{\nu=n_i-2}x_{\pi_i(\nu)}x_{\pi_i(\nu+1)}+\alpha_i x_{\pi_i(0)}+\beta_i x_{\pi_i(n_i-1)} \right)\\
   &+\frac{q}{p}\sum_{i=0}^{r-1}\lambda_{i+k}x_i\delta_i+\sum_{l=1}^{q-1}\sum_{i=0}^{n+r-1}\gamma_{i,l}x_i^l+\theta,
\end{aligned}
\end{equation}
where $\gcd(\lambda_i,p)=1$ for $i\in \{0,1,\ldots,k+r-1\}$, $0\le \alpha,\beta<p^k$ and $0\le \delta<p^r$. Let us define an ordered set as
\begin{equation}\label{Eq:ZCS_ZCCS_opt}
    C_{\alpha p^r+\delta}=\left[\begin{array}{c} \chi(F_0^{\alpha p^r+\delta}(\mathbf{x}))\\ \chi(F_1^{\alpha p^r+\delta}(\mathbf{x})) \\ \vdots \\ \chi(F_{p^k-1}^{\alpha p^r+\delta}(\mathbf{x})) \end{array}\right],
\end{equation}
and
\begin{equation}
    \mathbf{C}=\left\{ C_0,C_1,\ldots,C_{p^{k+r}-1} \right\}.
\end{equation}
Then $\mathbf{C}$ is a type-II $\left(p^{k+r},p^k,p^{n+r}-p^r+1,p^{n+r}\right)$-ZCCS.
\end{theorem}
\begin{proof}
  For $0\le s$ and $t < \alpha p^r+\delta$, we discuss AACF value $C_s$ obtained from \eqref{Eq:ZCS_ZCCS_opt}. Let $I=J+\tau$ such that $I,J$ and $\tau$ be positive integers and $\tau \ge p^r$. Assume
  \begin{equation}
    \begin{aligned}
         Q(x_0,x_1,\ldots,x_{n+r-1})=&\sum_{i=0}^{k-1}\left(\frac{q\lambda_i}{p}\sum_{\nu=0}^{\nu=n_i-2}x_{\pi_i(\nu)}x_{\pi_i(\nu+1)}\right)+\frac{q}{p}\sum_{i=0}^{r-1}\lambda_{i+k}x_i\delta_i\\&+\sum_{l=1}^{q-1}\sum_{i=0}^{n+r-1}\gamma_{i,l}x_i^l+\theta.
    \end{aligned}
    \end{equation}

Let $\mathbf{I}$ and $\mathbf{J}$ be $p$-ary vector representation $I$ and $J$ such that $\mathbf{I}$ and $\mathbf{J}$ have difference \textcolor{black}{at} the positions of $\{\pi_0(n_0-1),\pi_1(n_1-1), \ldots, \pi_{k-1}(n_k-1)\}$. Without loss of generality, we assume that $\mathbf{I}$ and $\mathbf{J}$ have differ \textcolor{black}{at least for}  $1\le d\le k$ places which are $\pi_{j_0}(n_{j_0}-1), \pi_{j_1}(n_{j_1}-1),\ldots, \pi_{j_{d-1}}(n_{j_{d-1}}-1)$. Then, we have
       \begin{equation}
    \begin{aligned}
        F^s_\beta(\mathbf{I})-F^t_\beta(\mathbf{J})=&\sum_{i=0}^{d-1}\frac{q\lambda_{j_i}}{p}\left((I_{\pi_{j_i}(n_{j_i}-2)}+\beta_{j_i})(I_{\pi_{j_i}(n_{j_i}-1)}-J_{\pi_{j_i}(n_{j_i}-1)})\right)+Q(\mathbf{I})-Q(\mathbf{J}),\\
        =& \sum_{i=0}^{d-1} \frac{q\lambda_{j_i}}{p}\left((I_{\pi_{j_i}(n_{j_i}-2)}+\beta_{j_i})(D^{IJ}_{\pi_{j_i}(n_{j_i}-1)})\right)+D^{IJ}_Q,
            \end{aligned}
        \end{equation}
        where $D^{IJ}_{\pi_{j_i}(n_{j_i}-1)}=I_{\pi_{j_i}(n_{j_i}-1)}-J_{\pi_{j_i}(n_{j_i}-1)}$ and $D^{IJ}_Q=Q(\mathbf{I})-Q(\mathbf{J})$. Thus,
        \begin{equation}
            \zeta_q^{ F^s_\beta(\mathbf{I})-F^t_\beta(\mathbf{J})}=\zeta_q^{D^{IJ}_Q}\zeta_p^{\sum_{i=0}^{d-1} \left((I_{\pi_{j_i}(n_{j_i}-2)}+\beta_{j_i})(\lambda_{j_i} D^{IJ}_{\pi_{j_i}(n_{j_i}-1)})\right)}.
        \end{equation}
        Taking sum over $\beta$,
        \begin{equation}\label{Eq:Fs_Ft_tau_1}
            \begin{aligned}
                \sum_{\beta=0}^{p^k-1}\zeta_q^{ F^s_\beta(\mathbf{I})-F^t_\beta(\mathbf{J})}=&\zeta_q^{D^{IJ}_Q}\sum_{\beta=0}^{p^k-1} \zeta_p^{\sum_{i=0}^{d-1} \left((I_{\pi_{j_i}(n_{j_i}-2)}+\beta_{j_i})(\lambda_{j_i} D^{IJ}_{\pi_{j_i}(n_{j_i}-1)})\right)},\\
                \sum_{\beta=0}^{p^k-1}\zeta_q^{ F^s_\beta(\mathbf{I})-F^t_\beta(\mathbf{J})}=&\zeta_q^{D^{IJ}_Q}\zeta_p^{\sum_{i=0}^{d-1} I_{\pi_{j_i}(n_{j_i}-2)} \lambda D^{IJ}_{\pi_{j_i}(n_{j_i}-1)}} \sum_{\beta=0}^{p^k-1} \zeta_p^{\sum_{i=0}^{d-1} \beta_{j_i} \lambda_{j_i} D^{IJ}_{\pi_{j_i}(n_{j_i}-1)}}.
            \end{aligned}
        \end{equation}
        Assume $\zeta_p^{\sum_{i=0}^{d-1} I_{\pi_{j_i}(n_{j_i}-2)}\lambda_{j_i} 
 D^{IJ}_{\pi_{j_i}(n_{j_i}-1)}}=\kappa_1$, then we have
        \begin{equation}\label{eq:sum_end}
            \begin{aligned}
                 \sum_{\beta=0}^{p^k-1}\zeta_q^{ F^s_\beta(\mathbf{I})-F^t_\beta(\mathbf{J})}=& \zeta_p^{D^{IJ}_Q}\kappa_1 \prod_{i=0}^{d-1}\left(\sum_{\beta=0}^{p^k-1} \zeta_p^{\beta_{j_i}\lambda_{j_i} D^{IJ}_{\pi_{j_i}(n_{j_i}-1)}}\right)\\
                 =& \zeta_p^{D^{IJ}_Q}\kappa_1 p^{k-d} \prod_{i=0}^{d-1}\left(\sum_{\beta_{j_i}=0}^{p-1} \zeta_p^{\beta_{j_i}\lambda_{j_i} D^{IJ}_{\pi_{j_i}(n_{j_i}-1)}}\right)\\
                 =& 0 ,
            \end{aligned}
        \end{equation}
\textcolor{black}{from \textit{Lemma} \ref{Lem:sum_zero}.}
        
        Let $\mathbf{I}$ and $\mathbf{J}$ \textcolor{black}{be two integers} such that $I_{\pi_i(n_i-1)}=J_{\pi_i(n_i-1)}$ for $0\le i<k$, then \textcolor{black}{find} the largest $u$ such that $I_{\pi_0(u)}\ne J_{\pi_0(u)}$, if we are unable to find such index, we again look for the largest $u$ such that $I_{\pi_1(u)}\ne J_{\pi_1(u)}$, we do this process until we found a $u$ and $v$ such that $I_{\pi_v(u)}\ne J_{\pi_v(u)}$. Let us consider $I_{\pi_v(u)}\ne J_{\pi_v(u)}$ and for all $u^\prime >u$, we have $I_{\pi_v(u^\prime)}= J_{\pi_v(u^\prime )}$. Let us consider $I^\kappa$ which is the integer having vector representation with base $p$ differ from $\mathbf{I}$ only at the position $\pi_v(u+1)$, i.e.,
        \begin{equation}\label{eq:II_k}
            (I_0,I_1,\ldots,I_{\pi_v(u+1)}-\kappa,\ldots,I_{n+r-1}),
        \end{equation}
        where $\kappa\in \{1,2,\ldots,p-1\}$. Similarly, $J^\kappa$ has vector representation with base $p$ as
        \begin{equation}\label{eq:JJ_k}
            (J_0,J_1,\ldots,J_{\pi_v(u+1)}-\kappa,\ldots,J_{n+r-1}).
        \end{equation}
        \textcolor{black}{As $J=I+\tau$ it implies $J^\kappa=I^\kappa+\tau$.}
From \eqref{Eq:F_ZCCS_max} and \eqref{eq:II_k},
\begin{equation}\label{eq:Fs:IkI}
    F^s_{\beta}(\mathbf{I}^\kappa)- F^s_{\beta}(\mathbf{I})=\frac{q\lambda_v}{p}(-\kappa)\left(I_{(\pi_v(u)}+I_{\pi_v(u+2)}+\sum_{l=1}^{q-1} \gamma_{\pi_v(u),l}\kappa^l \right).
\end{equation}
For $u=n-2$, \eqref{eq:Fs:IkI} changes to
\begin{equation}
    F^s_{\beta}(\mathbf{I}^\kappa)- F^s_{\beta}(\mathbf{I})=\frac{q\lambda_v}{p}(-\kappa)\left(I_{\pi_v(u)}+\sum_{l=1}^{q-1} \gamma_{\pi_v(u),l}\kappa^l \right).
\end{equation}
Similarly,
\begin{equation}\label{eq:Fs:JkJ}
    F^t_{\beta}(\mathbf{J}^\kappa)- F^t_{\beta}(\mathbf{J})=\frac{q\lambda_v}{p}(-\kappa)\left(J_{(\pi_v(u)}+J_{\pi_v(u+2)}+\sum_{l=1}^{q-1} \gamma_{\pi_v(u),l}\kappa^l \right).
\end{equation}
Subtracting \eqref{eq:Fs:JkJ} from \eqref{eq:Fs:IkI} and put $I_{\pi_v(u+2)}=J_{\pi_v(u+2)}$, we have
\begin{equation}
\begin{aligned}
    F^s_{\beta}(\mathbf{I}^\kappa)- F^t_{\beta}(\mathbf{J}^\kappa)-(F^s_{\beta}(\mathbf{I})-F^t_{\beta}(\mathbf{J}))=&\frac{q\lambda_v}{p}\kappa(I_{\pi_v(u)}-J_{\pi_v(u)})\\
    F^s_{\beta}(\mathbf{I}^\kappa)- F^t_{\beta}(\mathbf{J}^\kappa)=&F^s_{\beta}(\mathbf{I})-F^t_{\beta}(\mathbf{J})+\frac{q\lambda_v}{p}\kappa D^{IJ}_{\pi_v(u)}\\
    \zeta_q^{F^s_{\beta}(\mathbf{I}^\kappa)- F^t_{\beta}(\mathbf{J}^\kappa)}=&\zeta_q^{F^s_{\beta}(\mathbf{I})-F^t_{\beta}(\mathbf{J})}\zeta_q^{\frac{q\lambda_v}{p}\kappa D^{IJ}_{\pi_v(u)}}.
\end{aligned} 
\end{equation}
Summing over $\kappa$,
\begin{equation}
    \begin{aligned}
        \sum_{\kappa=1}^{p-1}\zeta_q^{F^s_{\beta}(\mathbf{I}^\kappa)- F^t_{\beta}(\mathbf{J}^\kappa)}=&\zeta_q^{F^s_{\beta}(\mathbf{I})-F^t_{\beta}(\mathbf{J})} \sum_{\kappa=1}^{p-1}\zeta_q^{\frac{q\lambda_v}{p}\kappa D^{IJ}_{\pi_v(u)}}\\
        =&\zeta_q^{F^s_{\beta}(\mathbf{I})-F^t_{\beta}(\mathbf{J})} \sum_{\kappa=1}^{p-1}\zeta_p^{\kappa \lambda_v D^{IJ}_{\pi_v(u)}}\\
         =&-\zeta_q^{F^s_{\beta}(\mathbf{I})-F^t_{\beta}(\mathbf{J})}.
    \end{aligned}
\end{equation}
Therefore,
\begin{equation}
    \sum_{\kappa=1}^{p-1}\zeta_q^{F^s_{\beta}(\mathbf{I}^\kappa)-F^t_\beta(\mathbf{J}^\kappa)}+\zeta_q^{ F^s_\beta (\mathbf{I})-F^t_\beta (\mathbf{J})}=0, ~\text{for all $\beta$.}
\end{equation}
Now summing over $\beta$,
\begin{equation}\label{eq:sum_kappa}
    \sum_{\beta=0}^{p^k-1}\sum_{\kappa=1}^{p-1}\zeta_q^{F^s_{\beta}(\mathbf{I}^\kappa)-F^t_\beta(\mathbf{J}^\kappa)}+\sum_{\beta=0}^{p^k-1}\zeta_q^{ F^s_\beta (\mathbf{I})-F^t_\beta (\mathbf{J})}=0.
\end{equation}
$(I^\kappa,J^\kappa)$, for $\kappa \in \{1,2,\ldots,p-1\}$ contribute to $\mathcal{C}(C_s,C_t)(\tau)$, whenever $(I,J)$ contribute to $\mathcal{C}(C_s,C_t)(\tau)$. From \eqref{eq:sum_end} and \eqref{eq:sum_kappa}, we have
\begin{equation}\label{eq:corr_zcz_zero}
    \mathcal{C}(C_s,C_t)(\tau)=0,\text{~ for all $s,t$ and $|\tau|>p^r-1$.}
\end{equation}
\textcolor{black}{Now, we process for ACCF between two codes at time shift zero; for that, let $0\le s\ne t\le p^{k+r}$.} First we represent $s$ and $t$ in the form of $\alpha p^r+ \delta$, where $0\le \delta<p^r$ and $0\le \alpha <p^n$. As $s\ne t$, at least one of the terms from $p$-ary representation of either $\alpha$ or $\delta$ is not the same. Let $\alpha_{i_0},\alpha_{i_2},\ldots,\alpha_{i_{a-1}}$ and $\delta_{j_0},\delta_{j_1},\ldots,\delta_{j_{b-1}}$, such that $0\le a < n$, $0\le b < r$ and $a+b\ge 0$. Then, we have
\begin{equation}
\begin{aligned}
        F^s_{\beta}(\mathbf{I})-F^t_{\beta}(\mathbf{I})=&\frac{q}{p}\left( \sum_{k=0}^{a-1} \lambda_{i_k}I_{\pi_{i_k}(0)} \alpha_{i_k}+\sum_{l=0}^{b-1} \lambda_{j_l} I_{n+j_l} \delta_{j_l} \right),\\
        \zeta_q^{F^s_{\beta}(\mathbf{I})-F^t_{\beta}(\mathbf{I})}=& \zeta_q^{\frac{q}{p} \left( \sum_{k=0}^{a-1} \lambda_{i_k}I_{\pi_{i_k}(0)}\alpha_{i_k}+\sum_{l=0}^{b-1} \lambda_{j_l} I_{n+j_l} \delta_{j_l} \right)}\\
        =& \zeta_p^{\lambda \left( \sum_{k=0}^{a-1} \lambda_{i_k}I_{\pi_{i_k}(0)}\alpha_{i_k}+\sum_{l=0}^{b-1} \lambda_{j_l} I_{n+j_l} \delta_{j_l} \right)}.
\end{aligned}
\end{equation}
Taking sum over $I$,
\begin{equation}
    \sum_{I=0}^{p^{n+r}-1} \zeta_q^{F^s_{\beta}(\mathbf{I})-F^t_{\beta}(\mathbf{I})}= \sum_{I=0}^{p^{n+r}-1} \zeta_p^{\lambda \left(\sum_{k=0}^{a-1} I_{i_k}\alpha_{i_k}+\sum_{l=0}^{b-1} I_{n+j_l}\delta_{j_l}\right)}=0,
\end{equation}
from \textit{Lemma} \ref{Lem:Walash_extention} for all $\beta$. Now taking sum over $\beta$
\begin{equation}\label{eq:orthogonal}
    \begin{aligned}
        \sum_{\beta=0}^{p^k-1} \sum_{I=0}^{p^{n+r}-1} \zeta_q^{F^s_{\beta}(\mathbf{I})-F^t_{\beta}(\mathbf{I})}&=0,\\
        \implies \mathcal{C}(C_s,C_t)(0)&=0.
    \end{aligned}
\end{equation}
From \eqref{eq:corr_zcz_zero} and \eqref{eq:orthogonal}, we have
\begin{equation}
    \mathcal{C}(C_s,C_t)(\tau)=\left\{\begin{array}{ll}
			0, &  \tau=0,~ s \neq t,\\
			0, & p^r\ge |\tau|.
		\end{array}\right.
\end{equation}
This completes the proof.
\end{proof}

\begin{remark}\label{Rm:CCC}
\textit{Theorem}  \ref{Th:multi_path}, without any isolated vertices $(r=0)$, provides $(p^k,p^k,p^n)$-CCC.
\end{remark}

In \textit{Theorem} \ref{Th:multi_path}, column sequence PMEPR upper bound is $p^r$, which can be reduced by adding a suitable function. In \cite{SarkarLCOM}, the authors have suggested \textcolor{black}{reducing} the column sequence PMEPR with a suitable function. If a sequence is part of the Golay complementary set with $p$ sequences, its PMEPR upper bound is $p$. Every column of a code obtained from \textit{Theorem} \ref{Th:multi_path} can be represented by a linear function of $(\beta_0,\beta_1,\ldots,\beta_{k-1})$, and if we add some function of $(\beta_0,\beta_1,\ldots,\beta_{k-1})$ in \eqref{Eq:F_ZCCS_max}, it for any row of the code, any function of $(\beta_0,\beta_1,\ldots,\beta_{k-1})$ is a constant, which does not affect the AACF/ACCF value of the codes. Therefore, we propose a remark that ensures low column sequence PMEPR.
\begin{remark}
    In \textit{Theorem} \ref{Th:multi_path}, replace \eqref{Eq:F_ZCCS_max} by 
    \begin{equation}
    F_{\beta}^{\alpha p^r+\delta}= f_{\beta}^{\alpha} + \Omega^{\delta}+\frac{q}{p}\sum_{\nu=1}^{k-1}\beta_{\nu-1}\beta_\nu,
\end{equation}
 for $0\le \alpha,\beta<p^k$ and $0\le \delta<p^r$ and $(\beta_0,\beta_1,\ldots,\beta_{k-1})$ is $p$-ary vector representation of $\beta$. Then the obtained type-II ZCCS have column sequence PMEPR bounded above by $p$.
\end{remark}
\begin{remark}
As $p=2$ and $q=2$, we get type-II binary $(2^{k+r},2^k,2^{n+r}-2^r+1,2^{n+r})$-ZCCS.
\end{remark}

\begin{example}\label{Ex:16codes}
    Let $\mathcal{G}(F)$ be a graph of two Hamiltonian paths, each having two vertices denoted by $x_0,x_1$ and $x_2,x_3$, respectively and $\mathcal{G}(\Omega)$ be an isolated vertex denoted by $x_4$, as shown in Fig. \ref{fig:graph}.
 \begin{figure}[!ht]
   \centering
   \begin{tikzpicture}[node distance={22.5mm}, thick, main/.style = {draw, circle}]
\node[main] (1) {$x_0$};
\node[main] (2) [above right of=1] {$x_1$};
\node[main] (3) [below right of=1] {$x_3$};
\node[main] (4) [above right of=3] {$x_2$}; 
\node[main] (5) [below right of=4] {$x_4$};
\draw (4) -- (2);
\draw (3) -- (5);
\end{tikzpicture} 
\centering
\caption{Graph of two Hamiltonian paths and one isolated vertex.}
   \label{fig:graph}
 \end{figure}
Now, define EBFs $F_{\beta}^{2\alpha+\delta}:\mathbb{Z}_2^5\to \mathbb{Z}_2$ as follows,
    \begin{equation}\label{Eq:example}
F_{\beta}^{2\alpha+\delta}=x_1x_2+x_3x_4+\alpha_0x_1+\alpha_1x_3+\beta_0x_2+\beta_1x_4+\delta_0 x_0+\beta_0\beta_1.
    \end{equation}
    Using \eqref{Eq:example} in \textit{Theorem} \ref{Th:multi_path}, we get
    \begin{equation}
C_{2\alpha+\delta}=\left[\begin{array}{c} \chi(F_{0}^{2\alpha+\delta}(\mathbf{x}))\\ \chi(F_1^{2\alpha+\delta}(\mathbf{x})) \\\chi(F_{2}^{2\alpha+\delta}(\mathbf{x})) \\ \chi(F_{3}^{2\alpha+\delta}(\mathbf{x})) \end{array}\right],
    \end{equation}
for $0\le \alpha,\beta < 4$ and $0\le \delta<2$. Then $\mathbf{C}=\{C_0,C_1,\ldots,C_{8}\}$ are given below ($+$ represents $1$ and $-$ represents $-1$) is a binary type-II $(8,4,31,32)$-ZCCS. Using \eqref{Eq:PMEPR}, we get column sequence PMEPR, which is upper bounded by $2$ for the obtained codes. 
 \begin{align*}
     C_0=&\begin{bmatrix}
        +     +     +     +     +     +    -    -     +     +     +     +     +     +    -    -     +     +     +     +     +     +    -    -    -    -    -    -    -    -     +     +\\
     +     +     +     +    -    -     +     +     +     +     +     +    -    -     +     +     +     +     +     +    -    -     +     +    -    -    -    -     +     +    -    -\\
     +     +     +     +     +     +    -    -     +     +     +     +     +     +    -    -    -    -    -    -    -    -     +     +     +     +     +     +     +     +    -    -\\
    -    -    -    -     +     +    -    -    -    -    -    -     +     +    -    -     +     +     +     +    -    -     +     +    -    -    -    -     +     +    -    -
     \end{bmatrix},\\
     C_1=&\begin{bmatrix}
      +    -     +    -     +    -    -     +     +    -     +    -     +    -    -     +     +    -     +    -     +    -    -     +    -     +    -     +    -     +     +    -\\
     +    -     +    -    -     +     +    -     +    -     +    -    -     +     +    -     +    -     +    -    -     +     +    -    -     +    -     +     +    -    -     +\\
     +    -     +    -     +    -    -     +     +    -     +    -     +    -    -     +    -     +    -     +    -     +     +    -     +    -     +    -     +    -    -     +\\
    -     +    -     +     +    -    -     +    -     +    -     +     +    -    -     +     +    -     +    -    -     +     +    -    -     +    -     +     +    -    -     +
     \end{bmatrix},\\
     C_2=&\begin{bmatrix}
     +     +    -    -     +     +     +     +     +     +    -    -     +     +     +     +     +     +    -    -     +     +     +     +    -    -     +     +    -    -    -    -\\
     +     +    -    -    -    -    -    -     +     +    -    -    -    -    -    -     +     +    -    -    -    -    -    -    -    -     +     +     +     +     +     +\\
     +     +    -    -     +     +     +     +     +     +    -    -     +     +     +     +    -    -     +     +    -    -    -    -     +     +    -    -     +     +     +     +\\
    -    -     +     +     +     +     +     +    -    -     +     +     +     +     +     +     +     +    -    -    -    -    -    -    -    -     +     +     +     +     +     +
     \end{bmatrix},\\
     C_3=&\begin{bmatrix}
     +    -    -     +     +    -     +    -     +    -    -     +     +    -     +    -     +    -    -     +     +    -     +    -    -     +     +    -    -     +    -     +\\
     +    -    -     +    -     +    -     +     +    -    -     +    -     +    -     +     +    -    -     +    -     +    -     +    -     +     +    -     +    -     +    -\\
     +    -    -     +     +    -     +    -     +    -    -     +     +    -     +    -    -     +     +    -    -     +    -     +     +    -    -     +     +    -     +    -\\
    -     +     +    -     +    -     +    -    -     +     +    -     +    -     +    -     +    -    -     +    -     +    -     +    -     +     +    -     +    -     +    -
     \end{bmatrix},\\
     C_4=&\begin{bmatrix}
           +     +     +     +     +     +    -    -    -    -    -    -    -    -     +     +     +     +     +     +     +     +    -    -     +     +     +     +     +     +    -    -\\
     +     +     +     +    -    -     +     +    -    -    -    -     +     +    -    -     +     +     +     +    -    -     +     +     +     +     +     +    -    -     +     +\\
     +     +     +     +     +     +    -    -    -    -    -    -    -    -     +     +    -    -    -    -    -    -     +     +    -    -    -    -    -    -     +     +\\
    -    -    -    -     +     +    -    -     +     +     +     +    -    -     +     +     +     +     +     +    -    -     +     +     +     +     +     +    -    -     +     +
     \end{bmatrix},\\
     C_5=&\begin{bmatrix}
     +    -     +    -     +    -    -     +    -     +    -     +    -     +     +    -     +    -     +    -     +    -    -     +     +    -     +    -     +    -    -     +\\
     +    -     +    -    -     +     +    -    -     +    -     +     +    -    -     +     +    -     +    -    -     +     +    -     +    -     +    -    -     +     +    -\\
     +    -     +    -     +    -    -     +    -     +    -     +    -     +     +    -    -     +    -     +    -     +     +    -    -     +    -     +    -     +     +    -\\
    -     +    -     +     +    -    -     +     +    -     +    -    -     +     +    -     +    -     +    -    -     +     +    -     +    -     +    -    -     +     +    -
     \end{bmatrix},\\
     C_6=&\begin{bmatrix}
     +     +    -    -     +     +     +     +    -    -     +     +    -    -    -    -     +     +    -    -     +     +     +     +     +     +    -    -     +     +     +     +\\
     +     +    -    -    -    -    -    -    -    -     +     +     +     +     +     +     +     +    -    -    -    -    -    -     +     +    -    -    -    -    -    -\\
     +     +    -    -     +     +     +     +    -    -     +     +    -    -    -    -    -    -     +     +    -    -    -    -    -    -     +     +    -    -    -    -\\
    -    -     +     +     +     +     +     +     +     +    -    -    -    -    -    -     +     +    -    -    -    -    -    -     +     +    -    -    -    -    -    -
     \end{bmatrix},\\
     C_7=&\begin{bmatrix}
     +    -    -     +     +    -     +    -    -     +     +    -    -     +    -     +     +    -    -     +     +    -     +    -     +    -    -     +     +    -     +    -\\
     +    -    -     +    -     +    -     +    -     +     +    -     +    -     +    -     +    -    -     +    -     +    -     +     +    -    -     +    -     +    -     +\\
     +    -    -     +     +    -     +    -    -     +     +    -    -     +    -     +    -     +     +    -    -     +    -     +    -     +     +    -    -     +    -     +\\
    -     +     +    -     +    -     +    -     +    -    -     +    -     +    -     +     +    -    -     +    -     +    -     +     +    -    -     +    -     +    -     +
     \end{bmatrix}.
 \end{align*}
\end{example}
In type-I ZCCS with $4$ sequences of length $32$, the number of codes can be $8$ only if ZCZ width is less than $16$ due to the bound $K \leq M \left\lfloor \frac{N}{Z}\right\rfloor$. However, \textit{Example} \ref{Ex:16codes} provides $8$ codes with the ZCZ width $31$.

In Table \ref{Tab:ZCCS}, we provide some available parameters for type-I ZCCS and type-II ZCCS, where it can be observed that type-II ZCCS provides a larger number of codes compared to type-I ZCCS for a specific ZCZ width.
\section{Comparison}\label{sec:compar}
\textcolor{black}{For $r=0$, \textit{Theorem} \ref{Th:multi_path} produces $(p^k,p^k,p^n)$-CCC, therefore, CCC presented in \cite{Shen_SNC} can be seen as a special case of the proposed construction.} Similarly, for $p=2$ and $r=0$, the proposed construction produces $(2^k,2^k,2^n)$-CCC. Hence, \cite{Rat} also appears as a special case of our proposed construction. Further, for $p=2$, $r=0$ and $k=0$, the proposed construction produces a GCP of length in the form of $2^n$, with its complementary mate. Hence, \cite{Davis}  also appears as a special case of the proposed construction.

 We are providing a table to compare type-II ZCCS with the existing ZCCSs.
\begin{table}[!ht]
		\centering
			\caption{Comparison with some available ZCCS}\label{Tab:ZCCS}
		\begin{tabular}{@{}lllll@{}}
			\toprule
			 Source & Method & ZCCS & $\left(K,M,Z,N\right)$ & Constraint \\
		\midrule
	\cite{SarkarLCOM}& GBF & Type-I & $\left(2^{k+p},2^k,2^{m-p},2^m\right)$& $p\le m$, $k+1\ge m$, $k\ge 0$\\
         \cite{Sarkarpseudo}& pseudo BF &Type-I & $\left(p2^{k+1},2^{k+1},2^{m},p2^m\right)$& $p$ is prime, $k+1\le m$  \\
         \cite{Ghosh} & GBF & Type-I & $\left(P2^{k+1},2^{k+1},2^m,P2^m\right)$& $P\ge 2$, $k+1\le m$  \\
                \cite{Rajen_ar} & Kronecker & Type-II & $\left(rK,K,NP-P+1,NP\right)$& $r\le P$  \\
                &  product & & & $K\times N$ code size of a CCC\\
         Th. \ref{Th:multi_path} & EBF & Type-II & $\left(P^{k+r},P^k,P^{n+r}-P^r+1,P^{n+r}\right)$& $P\ge 2$, $k\le n$, $r,k\ge 0$\\
         \botrule
			\end{tabular}
	\end{table}
\section{Conclusion}
\label{Sec:Con}
In order to construct a type-II ZCCS, we connected EBF to a graph that mainly consists of Hamiltonian paths and isolated vertices. A construction for type-II $(p^{k+r},p^{k},p^{r+n}-p^r+1,p^{r+n})$-ZCCS has been proposed using the EBF corresponding to the graph of $k$ collection of Hamiltonian paths and $r$ isolated vertices such that $p\ge 2$, $r\ge 0$ and $0<k<n$. The column sequence PMEPR is also bounded by $p$. Furthermore, the proposed construction can generate $(p,p,p)$-CCC, for any integral value of $p$. The proposed type-II $(K,M,Z,N)$-ZCCS follows the code bound $K=M(N-Z+1)$. The works in \cite{Davis, Rat, Shen_SNC} are \textcolor{black}{special cases} of the proposed construction.

\bibliography{sn-bibliography}


\begin{thebibliography}{17}
\ifx \bisbn   \undefined \def \bisbn  #1{ISBN #1}\fi
\ifx \binits  \undefined \def \binits#1{#1}\fi
\ifx \bauthor  \undefined \def \bauthor#1{#1}\fi
\ifx \batitle  \undefined \def \batitle#1{#1}\fi
\ifx \bjtitle  \undefined \def \bjtitle#1{#1}\fi
\ifx \bvolume  \undefined \def \bvolume#1{\textbf{#1}}\fi
\ifx \byear  \undefined \def \byear#1{#1}\fi
\ifx \bissue  \undefined \def \bissue#1{#1}\fi
\ifx \bfpage  \undefined \def \bfpage#1{#1}\fi
\ifx \blpage  \undefined \def \blpage #1{#1}\fi
\ifx \burl  \undefined \def \burl#1{\textsf{#1}}\fi
\ifx \doiurl  \undefined \def \doiurl#1{\url{https://doi.org/#1}}\fi
\ifx \betal  \undefined \def \betal{\textit{et al.}}\fi
\ifx \binstitute  \undefined \def \binstitute#1{#1}\fi
\ifx \binstitutionaled  \undefined \def \binstitutionaled#1{#1}\fi
\ifx \bctitle  \undefined \def \bctitle#1{#1}\fi
\ifx \beditor  \undefined \def \beditor#1{#1}\fi
\ifx \bpublisher  \undefined \def \bpublisher#1{#1}\fi
\ifx \bbtitle  \undefined \def \bbtitle#1{#1}\fi
\ifx \bedition  \undefined \def \bedition#1{#1}\fi
\ifx \bseriesno  \undefined \def \bseriesno#1{#1}\fi
\ifx \blocation  \undefined \def \blocation#1{#1}\fi
\ifx \bsertitle  \undefined \def \bsertitle#1{#1}\fi
\ifx \bsnm \undefined \def \bsnm#1{#1}\fi
\ifx \bsuffix \undefined \def \bsuffix#1{#1}\fi
\ifx \bparticle \undefined \def \bparticle#1{#1}\fi
\ifx \barticle \undefined \def \barticle#1{#1}\fi
\bibcommenthead
\ifx \bconfdate \undefined \def \bconfdate #1{#1}\fi
\ifx \botherref \undefined \def \botherref #1{#1}\fi
\ifx \url \undefined \def \url#1{\textsf{#1}}\fi
\ifx \bchapter \undefined \def \bchapter#1{#1}\fi
\ifx \bbook \undefined \def \bbook#1{#1}\fi
\ifx \bcomment \undefined \def \bcomment#1{#1}\fi
\ifx \oauthor \undefined \def \oauthor#1{#1}\fi
\ifx \citeauthoryear \undefined \def \citeauthoryear#1{#1}\fi
\ifx \endbibitem  \undefined \def \endbibitem {}\fi
\ifx \bconflocation  \undefined \def \bconflocation#1{#1}\fi
\ifx \arxivurl  \undefined \def \arxivurl#1{\textsf{#1}}\fi
\csname PreBibitemsHook\endcsname

\bibitem[\protect\citeauthoryear{Suehiro and Hatori}{1988}]{sue}
\begin{barticle}
\bauthor{\bsnm{Suehiro}, \binits{N.}},
\bauthor{\bsnm{Hatori}, \binits{M.}}:
\batitle{N-shift cross-orthogonal sequences}.
\bjtitle{IEEE Trans. Inf. Theory}
\bvolume{34}(\bissue{1}),
\bfpage{143}--\blpage{146}
(\byear{1988})
\doiurl{10.1109/18.2615}
\end{barticle}
\endbibitem

\bibitem[\protect\citeauthoryear{{Chen, H.H.}}{2007}]{nextGenCDMA}
\begin{bbook}
\bauthor{\bsnm{{Chen, H.H.}}}:
\bbtitle{The Next Generation CDMA Technologies}.
\bpublisher{wiley},
\blocation{{}}
(\byear{2007})
\end{bbook}
\endbibitem

\bibitem[\protect\citeauthoryear{Rathinakumar and Chaturvedi}{2008}]{Rat}
\begin{barticle}
\bauthor{\bsnm{Rathinakumar}, \binits{A.}},
\bauthor{\bsnm{Chaturvedi}, \binits{A.K.}}:
\batitle{{C}omplete mutually orthogonal {G}olay complementary sets from {R}eed–{M}uller codes}.
\bjtitle{IEEE Trans. Inf. Theory}
\bvolume{54}(\bissue{3}),
\bfpage{1339}--\blpage{1346}
(\byear{2008})
\doiurl{10.1109/TIT.2007.915980}
\end{barticle}
\endbibitem

\bibitem[\protect\citeauthoryear{Sarkar et~al.}{2021}]{Palash_ar}
\begin{botherref}
\oauthor{\bsnm{Sarkar}, \binits{P.}},
\oauthor{\bsnm{Liu}, \binits{Z.}},
\oauthor{\bsnm{Majhi}, \binits{S.}}:
Multivariable function for new complete complementary codes with arbitrary lengths
(2021)
\doiurl{10.48550/ARXIV.2102.10517}
\end{botherref}
\endbibitem

\bibitem[\protect\citeauthoryear{{Fan} et~al.}{2007}]{Fan}
\begin{barticle}
\bauthor{\bsnm{{Fan}}, \binits{P.}},
\bauthor{\bsnm{{Yuan}}, \binits{W.}},
\bauthor{\bsnm{{Tu}}, \binits{Y.}}:
\batitle{{Z}-complementary binary sequences}.
\bjtitle{IEEE Signal Process. Lett.}
\bvolume{14}(\bissue{8}),
\bfpage{509}--\blpage{512}
(\byear{2007})
\doiurl{10.1109/LSP.2007.891834}
\end{barticle}
\endbibitem

\bibitem[\protect\citeauthoryear{Feng et~al.}{2008}]{Feng_ZCCS}
\begin{barticle}
\bauthor{\bsnm{Feng}, \binits{L.}},
\bauthor{\bsnm{Fan}, \binits{P.}},
\bauthor{\bsnm{Tang}, \binits{X.}},
\bauthor{\bsnm{Loo}, \binits{K.-k.}}:
\batitle{Generalized pairwise {Z}-complementary codes}.
\bjtitle{IEEE Signal Process. Lett.}
\bvolume{15},
\bfpage{377}--\blpage{380}
(\byear{2008})
\doiurl{10.1109/LSP.2008.919997}
\end{barticle}
\endbibitem

\bibitem[\protect\citeauthoryear{Tian et~al.}{2021}]{Li}
\begin{barticle}
\bauthor{\bsnm{Tian}, \binits{L.}},
\bauthor{\bsnm{Li}, \binits{Y.}},
\bauthor{\bsnm{Zhou}, \binits{Z.}},
\bauthor{\bsnm{Xu}, \binits{C.}}:
\batitle{Two classes of {Z}-complementary code sets with good cross-correlation subsets via paraunitary matrices}.
\bjtitle{IEEE Trans. Commun.}
\bvolume{69}(\bissue{5}),
\bfpage{2935}--\blpage{2947}
(\byear{2021})
\doiurl{10.1109/TCOMM.2021.3058371}
\end{barticle}
\endbibitem

\bibitem[\protect\citeauthoryear{Das et~al.}{2019}]{Das}
\begin{bchapter}
\bauthor{\bsnm{Das}, \binits{S.}},
\bauthor{\bsnm{Parampalli}, \binits{U.}},
\bauthor{\bsnm{Majhi}, \binits{S.}},
\bauthor{\bsnm{Liu}, \binits{Z.}}:
\bctitle{Construction of new optimal {Z}-complementary code sets from {Z}-paraunitary matrices}.
In: \bbtitle{Ninth International Workshop on Signal Design and Its Applications in Communications (IWSDA)},
pp. \bfpage{1}--\blpage{5}
(\byear{2019}).
\doiurl{10.1109/IWSDA46143.2019.8966087}
\end{bchapter}
\endbibitem

\bibitem[\protect\citeauthoryear{Sarkar and Majhi}{2021}]{SarkarLCOM}
\begin{barticle}
\bauthor{\bsnm{Sarkar}, \binits{P.}},
\bauthor{\bsnm{Majhi}, \binits{S.}}:
\batitle{A direct construction of optimal {ZCCS} with maximum column sequence {PMEPR} two for {MC-CDMA} system}.
\bjtitle{IEEE Commun. Lett.}
\bvolume{25}(\bissue{2}),
\bfpage{337}--\blpage{341}
(\byear{2021})
\doiurl{10.1109/LCOMM.2020.3029204}
\end{barticle}
\endbibitem

\bibitem[\protect\citeauthoryear{Sarkar et~al.}{2021}]{Sarkarpseudo}
\begin{barticle}
\bauthor{\bsnm{Sarkar}, \binits{P.}},
\bauthor{\bsnm{Majhi}, \binits{S.}},
\bauthor{\bsnm{Liu}, \binits{Z.}}:
\batitle{Pseudo-{B}oolean functions for optimal {Z}-complementary code sets with flexible lengths}.
\bjtitle{IEEE Signal Process. Lett.}
\bvolume{28},
\bfpage{1350}--\blpage{1354}
(\byear{2021})
\doiurl{10.1109/LSP.2021.3091886}
\end{barticle}
\endbibitem

\bibitem[\protect\citeauthoryear{Ghosh et~al.}{2022}]{Ghosh}
\begin{barticle}
\bauthor{\bsnm{Ghosh}, \binits{G.}},
\bauthor{\bsnm{Majhi}, \binits{S.}},
\bauthor{\bsnm{Sarkar}, \binits{P.}},
\bauthor{\bsnm{Upadhaya}, \binits{A.K.}}:
\batitle{Direct construction of optimal {Z}-complementary code sets with even lengths by using generalized {B}oolean functions}.
\bjtitle{IEEE Signal process. Lett.}
\bvolume{29},
\bfpage{872}--\blpage{876}
(\byear{2022})
\doiurl{10.1109/LSP.2022.3159401}
\end{barticle}
\endbibitem

\bibitem[\protect\citeauthoryear{Wu et~al.}{2021}]{Wu}
\begin{barticle}
\bauthor{\bsnm{Wu}, \binits{S.-W.}},
\bauthor{\bsnm{Şahin}, \binits{A.}},
\bauthor{\bsnm{Huang}, \binits{Z.-M.}},
\bauthor{\bsnm{Chen}, \binits{C.-Y.}}:
\batitle{Z-complementary code sets with flexible lengths from generalized {B}oolean functions}.
\bjtitle{IEEE Access}
\bvolume{9},
\bfpage{4642}--\blpage{4652}
(\byear{2021})
\doiurl{10.1109/ACCESS.2020.3047955}
\end{barticle}
\endbibitem

\bibitem[\protect\citeauthoryear{Kumar et~al.}{2024}]{Rajen_ar}
\begin{botherref}
\oauthor{\bsnm{Kumar}, \binits{R.}},
\oauthor{\bsnm{Jha}, \binits{S.K.}},
\oauthor{\bsnm{Srivastava}, \binits{P.K.}},
\oauthor{\bsnm{Majhi}, \binits{S.}}:
Construction of type-ii zccs for the mc-cdma system with low pmepr.
Digital Signal Processing,
104570
(2024)
\doiurl{10.1016/j.dsp.2024.104570}
\end{botherref}
\endbibitem

\bibitem[\protect\citeauthoryear{Davis and Jedwab}{1999}]{Davis}
\begin{barticle}
\bauthor{\bsnm{Davis}, \binits{J.A.}},
\bauthor{\bsnm{Jedwab}, \binits{J.}}:
\batitle{Peak-to-mean power control in {OFDM}, {G}olay complementary sequences, and {R}eed-{M}uller codes}.
\bjtitle{IEEE Trans. Inf. Theory}
\bvolume{45}(\bissue{7}),
\bfpage{2397}--\blpage{2417}
(\byear{1999})
\doiurl{10.1109/18.796380}
\end{barticle}
\endbibitem

\bibitem[\protect\citeauthoryear{Shen et~al.}{2023}]{Shen_SNC}
\begin{barticle}
\bauthor{\bsnm{Shen}, \binits{B.}},
\bauthor{\bsnm{Yang}, \binits{Y.}},
\bauthor{\bsnm{Zhou}, \binits{Z.}},
\bauthor{\bsnm{Mesnager}, \binits{S.}}:
\batitle{Constructions of spectrally null constrained complete complementary codes via the graph of extended {B}oolean functions}.
\bjtitle{IEEE Tran. Inf. Th.}
\bvolume{69}(\bissue{9}),
\bfpage{6028}--\blpage{6039}
(\byear{2023})
\doiurl{10.1109/TIT.2023.3280210}
\end{barticle}
\endbibitem

\bibitem[\protect\citeauthoryear{Kumar et~al.}{2021}]{Rajen}
\begin{barticle}
\bauthor{\bsnm{Kumar}, \binits{R.}},
\bauthor{\bsnm{Sarkar}, \binits{P.}},
\bauthor{\bsnm{Srivastava}, \binits{P.K.}},
\bauthor{\bsnm{Majhi}, \binits{S.}}:
\batitle{A direct construction of asymptotically optimal type-{II} {ZCP} for every possible even length}.
\bjtitle{IEEE Signal process. Lett.}
\bvolume{28},
\bfpage{1799}--\blpage{1802}
(\byear{2021})
\doiurl{10.1109/LSP.2021.3105927}
\end{barticle}
\endbibitem

\bibitem[\protect\citeauthoryear{Li and Xu}{2015}]{Yubo_Comm}
\begin{barticle}
\bauthor{\bsnm{Li}, \binits{Y.}},
\bauthor{\bsnm{Xu}, \binits{C.}}:
\batitle{{ZCZ} aperiodic complementary sequence sets with low column sequence pmepr}.
\bjtitle{IEEE Commun. Lett.}
\bvolume{19}(\bissue{8}),
\bfpage{1303}--\blpage{1306}
(\byear{2015})
\doiurl{10.1109/LCOMM.2015.2446473}
\end{barticle}
\endbibitem

\end{thebibliography}

\end{document}